\newtheorem{thm}{Theorem}
\newtheorem{cor}{Corollary}
\newtheorem{lem}{Lemma}
\begin{document}

\title{On Optimal Message Assignments for Interference Channels with CoMP Transmission}
\author{{\large{Aly El Gamal, V.~Sreekanth Annapureddy, and Venugopal V.~Veeravalli}}\\ \large{ECE Department and Coordinated Science Laboratory}\\\large{University of Illinois at Urbana-Champaign}}

\maketitle

\begin{abstract}
The degrees of freedom (DoF) number of the fully connected $K-$user Gaussian interference channel is known to be $\frac{K}{2}$ (see~\cite{Cadambe-IA}). In~\cite{Annapureddy-ElGamal-Veeravalli-IT11}, the DoF for the same channel model was studied while allowing each message to be available at its own transmitter as well as $M-1$ successive transmitters. In particular, it was shown that the DoF gain through cooperation does not scale with the number of users $K$ for a fixed value of $M$, i.e., the per user DoF number is $\frac{1}{2}$. In this work, we relax the cooperation constraint such that each message can be assigned to $M$ transmitters without imposing further constraints on their location. Under the new constraint, we study properties for different message assignments in terms of the gain in the per user DoF number over that achieved without cooperation. In particular, we show that a local cooperation constraint that confines the transmit set of each message within a $o(K)$ radius cannot achieve a per user DoF number that is greater than $\frac{1}{2}$. Moreover, we show that the same conclusion about the per user DoF number holds for any assignment of messages such that each message cannot be available at more than two transmitters. Finally, for the case where $M>2$, we do not know whether a per user DoF number that is greater than $\frac{1}{2}$ is achievable. However, we identify a candidate class of message assignments that could potentially lead to a positive answer.    
\end{abstract}

\begin{IEEEkeywords}
CoMP, Cooperation Order, Local Cooperation 
\end{IEEEkeywords}
\section{Introduction}
As a result of developments in the infrastructure of cellular networks, there has been a recent growing interest in the potential of {\em cooperative} transmission techniques where, through a backhaul link, messages can be available at more than one transmitter, i.e., Coordinated Multi-Point (CoMP) transmission. This new development has a proven advantage~(see, e.g.,~\cite{CoMP-book}) for mitigating the effect of interfering signals, in particular, for cell-edge users. We formally pose the problem of maximizing the sum rate in a multiuser channel with CoMP transmission by defining a \emph{cooperation order} constraint, which bounds the maximum number of transmitters at which any message can be available by a cooperation order $M$. 

In this work, we consider the fully connected Gaussian interference channel model with generic channel coefficients (as defined in~\cite{Cadambe-IA}), and study the degrees of freedom (DoF) of the channel while allowing each message to be available at $M$ transmitters.  
The DoF criterion provides an analytically tractable way to characterize the sum capacity and captures the number of \emph{interference-free} sessions in the channel.  In~\cite{Madsen-Nosratinia}, the DoF number for the $K-$user channel was shown to be upper bounded by $K/2$, i.e., the per user DoF number is upper bounded by $1/2$. This was shown to be achievable through the interference alignment (IA) scheme in~\cite{Cadambe-IA}.

In~\cite{Annapureddy-ElGamal-Veeravalli-IT11}, the DoF of a $K-$user channel was studied in the special case where each message is assigned to its own transmitter as well as $M-1$ successive transmitters. In particular, it was shown that the DoF \emph{gain} (over $K/2$) achieved through this kind of cooperation does not scale with $K$ for a fixed value of $M$. We shed more light on this issue in this work. More precisely, we study the asymptotic per user DoF number as the number of users in the network increases, and message assignments that may lead to a DoF gain that scales with $K$. 


\section{System Model and Notations}\label{sec:systemmodel}
We use the standard model for the $K-$user interference channel with single antenna transmitters and receivers.
\begin{equation}
Y_i(t) = \sum_{j=1}^{K} H_{ij}(t) X_j(t) + Z_i(t)
\end{equation}
where $t$ is the time index, $X_i(t)$ is the transmitted signal at transmitter $i$, $Y_i(t)$ is the received signal at receiver $i$, $Z_i(t)$ is the zero mean unit variance Gaussian noise at receiver $i$, and $H_{ij} (t)$ is the channel coefficient from transmitter $j$ to receiver $i$ over the $t^{th}$ time slot. We assume that the channel coefficients are drawn independently from a continuous distribution. i.e., the channel coefficients are \emph{generic}. For each $i \in [K]$, let $W_i$ be the message intended for receiver $i$.

We use $[K]$ to denote the set $\{1,2,\ldots,K\}$. For any set ${\cal A} \subseteq [K]$, we define the complement set $\bar{\cal A} = \{i: i\in[K], i\notin {\cal A}\}$. We use the abbreviations $W_{\cal A}$, $X_{\cal A}$, and $Y_{\cal A}$ to denote the sets $\{W_i, i\in {\cal A}\}$, $\{X_i, i\in {\cal A}\}$, and $\{Y_i, i\in {\cal A}\}$, respectively.

\subsection{Cooperation Model}
Let ${\cal T}_i \subseteq [K]$ be the transmit set of receiver $i$, i.e., those transmitters with the knowledge of $W_i$. The transmitters in ${\cal T}_i$ cooperatively transmit the message $W_i$ to the receiver $i$. The messages $\{W_i\}$ are assumed to be independent of each other. The cooperation order $M$ is defined as the maximum size of a transmit set:
\begin{equation}\label{eq:coop_order}
M = \max_i |{\cal T}_i|.
\end{equation}

For any set ${\cal S} \subseteq [K]$, we define $C_{\cal S}$ as the set of messages carried by transmitters with indices in ${\cal S}$, i.e., the set $\{i: {\cal T}_i \cap {\cal S} \neq \phi\}$.
\subsection{Degrees of Freedom}
The total power constraint across all the users is $P$.  The rates $R_i(P) = \frac{\log|W_i|}{n}$ are achievable if the decoding error probabilities of all messages can be simultaneously made arbitrarily small for large enough $n$. The capacity region $\mathcal{C}(P)$ is the set of all achievable rate tuples. The total number of degrees of freedom ($\eta$) is defined as $\limsup_{P \rightarrow \infty}\frac{ C_{\Sigma}(P)}{\log P}$, where $C_\Sigma(P)$ is the sum capacity. Since $\eta$ depends on the specific choice of transmit sets as well as the realization of the channel coefficients, we define $\eta(K,M)$ as the best achievable $\eta$ over all choices of transmit sets satisfying the cooperation order constraint in \eqref{eq:coop_order}, that holds for almost all realizations of a $K-$user channel defined as above.
In order to simplify our analysis, we define the following criterion to measure how $\eta(K,M)$ scales with $K$ for a fixed $M$.
\begin{equation}\label{eq:tau}
\tau(M) = \lim_{K\rightarrow \infty} \frac{\eta(K,M)}{K}
\end{equation}
It is worth noting here that the bounds derived in~\cite{Madsen-Nosratinia} and~\cite{Cadambe-IA} imply that $\tau(1)=\frac{1}{2}$.

\subsection{Message Assignment Strategy}
A message assignment strategy is defined by a sequence of transmit sets $({\cal T}_{i,K}), i\in[K], K\in\{1,2,\ldots\}$, where for each positive integer $K$, ${\cal T}_{i,K} \subseteq [K], |{\cal T}_{i,K}| \leq M, \forall i\in[K]$. We call a message assignment strategy \emph{optimal} for a sequence of $K-$user channels defined as above, $K\in\{1,2,\ldots\}$,  if and only if there exists a sequence of coding schemes achieving $\tau(M)$ where for any positive integer $K$, the transmit sets $({\cal T}_i)_{i\in[K]}$ used for the $K-$user channel are the transmit sets $({\cal T}_{i,K})_{i\in[K]}$ defined by the strategy.

\section{DoF Upper Bound}
In order to characterize the DoF of the channel $\tau(M)$, we need to consider all possible strategies for message assignments satisfying the cooperation order constraint defined in~\eqref{eq:coop_order}. In this section, we provide a way to upper bound the maximum achievable DoF for each such assignment, thereby, introducing a criterion for comparing different message assignments satisfying~\eqref{eq:coop_order} using the special cases where this bound holds tightly.

We start by stating the following auxiliary lemma for any $K$-user Gaussian interference channel with a DoF number of $\eta$. For any set ${\cal A} \subseteq [K]$, Define $U_{\cal A} = \cup_{i \notin {\cal A}} {\cal T}_i$, then, 
\begin{lem}\label{lem:dofouterbound}(\cite{ElGamal-Annapureddy-Veeravalli-ICC12}, Lemma $2$)
If there exists a set ${\cal A}\subseteq [K]$ and a function $f$, such that $f\left(Y_{\cal A},Z_{\cal A},X_{\bar{U}_{\cal A}}\right)=X_{U_{\cal A}}$, then $\eta \leq |{\cal A}|$. 
\end{lem}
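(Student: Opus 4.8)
\emph{Proof strategy.}
The plan is a genie-aided converse: I will exhibit a small amount of side information that, once supplied to the $|{\cal A}|$ receivers indexed by ${\cal A}$, lets them reliably reconstruct \emph{all} $K$ messages; since those $|{\cal A}|$ single-antenna observations can support a sum DoF of at most $|{\cal A}|$, the bound $\eta\le|{\cal A}|$ follows. The structural fact that makes the hypothesis on $f$ usable is this: since $U_{\cal A}=\cup_{i\notin{\cal A}}{\cal T}_i$, any transmitter $j\in\bar{U}_{\cal A}$ satisfies $j\notin{\cal T}_i$ for every $i\notin{\cal A}$, so $j$ carries only messages from $W_{\cal A}$; hence (taking the encoders deterministic, as we may for DoF purposes) the transmitted signals $X_{\bar{U}_{\cal A}}$ are a deterministic function of $W_{\cal A}$. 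Dually, ${\cal T}_i\subseteq U_{\cal A}$ for every $i\notin{\cal A}$, so the messages $W_{\bar{\cal A}}$ are carried only by transmitters in $U_{\cal A}$.

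Now fix any coding scheme with vanishing error and run the following at the receivers in ${\cal A}$. They first decode $W_{\cal A}$ from $Y_{\cal A}$ (with vanishing error) and form $X_{\bar{U}_{\cal A}}$ from it. Feeding $\bigl(Y_{\cal A},Z_{\cal A},X_{\bar{U}_{\cal A}}\bigr)$ into the hypothesized $f$ returns $X_{U_{\cal A}}$, hence the entire transmitted tuple $X_{[K]}$; using $X_{[K]}$ together with the noise $Z_{\bar{\cal A}}$ they reconstruct each $Y_i=\sum_{j}H_{ij}X_j+Z_i$ for $i\in\bar{\cal A}$ and run the decoders of the receivers in $\bar{\cal A}$ to recover $W_{\bar{\cal A}}$. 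Thus $W_{[K]}$ is a vanishing-error function of $Y_{\cal A}$ and the noise realizations, so by Fano $n\sum_{i=1}^{K}R_i=H(W_{[K]})\le I\bigl(W_{[K]};Y_{\cal A},Z_{[K]}\bigr)+n\epsilon_n$, and it remains to bound the mutual-information term by $|{\cal A}|n\log P+o(n\log P)$.

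That last step is the one I expect to need the most care, and the difficulty is precisely the role of $Z_{\cal A}$: one cannot simply hand the retained receivers their own noise for free, because $\bigl(Y_{\cal A},Z_{\cal A}\bigr)$ then determines the \emph{noiseless} received signal $Y_{\cal A}-Z_{\cal A}$, whose entropy is not controlled by the received power, and the naive estimate becomes vacuous. The remedy is to apply $f$ only up to a bounded-power perturbation: the receivers run $f$ with the unavailable $Z_{\cal A}$ replaced by $0$, incurring an additive error of bounded second moment — in the cases where we invoke this lemma $f$ is affine, so the error is exactly linear in $Z_{\cal A}$ — push it through the channel, and still decode $W_{\bar{\cal A}}$ from the slightly noisier copy of $Y_{\bar{\cal A}}$, the extra bounded-variance noise being immaterial at the DoF scale. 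The genie then shrinks to $Z_{\bar{\cal A}}$, which is DoF-free since it is independent of the messages and has bounded variance, and one obtains $\tfrac1n I\bigl(W_{[K]};Y_{\cal A},Z_{\bar{\cal A}}\bigr)\le\tfrac1n\bigl(h(Y_{\cal A})-h(Z_{\cal A})\bigr)+o(\log P)\le|{\cal A}|\log P+o(\log P)$, which gives $\eta\le|{\cal A}|$ upon dividing by $\log P$ and letting $P\to\infty$.
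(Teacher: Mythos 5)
First, a point of comparison: the paper does not actually prove this lemma --- it cites Lemma~2 of~\cite{ElGamal-Annapureddy-Veeravalli-ICC12} --- and your overall plan (a genie-aided Fano argument in which the $|{\cal A}|$ retained receivers reconstruct $X_{[K]}$, hence all of $W_{[K]}$, so that $n\sum_i R_i \le I(W_{[K]};Y_{\cal A},\text{genie})+n\epsilon_n \le |{\cal A}|\,n\log P + o(n\log P)$) is exactly the standard route taken there. Your structural observations are correct: every transmitter in $\bar U_{\cal A}$ carries only messages in $W_{\cal A}$, so $X_{\bar U_{\cal A}}$ is a function of $W_{\cal A}$, which is decodable from $Y_{\cal A}$. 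You have also correctly sensed that the delicate step is handing the retained receivers their own noise $Z_{\cal A}$, although your diagnosis is slightly off: the positive term is harmless, since $\entropy{Y_{\cal A}-Z_{\cal A}}$ \emph{is} controlled by the received power ($E[(\sum_j H_{ij}X_j)^2]=O(P)$ by Cauchy--Schwarz). What actually breaks is the negative term: for deterministic encoders $Y_{\cal A}-Z_{\cal A}$ is a deterministic function of $W_{[K]}$, so $\entropy{Y_{\cal A}-Z_{\cal A}\mid W_{[K]}}=-\infty$ and $I(W_{[K]};Y_{\cal A},Z_{\cal A})=+\infty$; the bound is vacuous for that reason, not because the entropy is uncontrolled.

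There are two genuine gaps in your remedy. First, replacing $Z_{\cal A}$ by $0$ inside $f$ and calling the resulting error ``bounded'' is only valid when $f$ is affine (or Lipschitz in $Z_{\cal A}$ uniformly in $P$); the lemma is stated for an arbitrary $f$, so you have proved a strictly weaker statement. This restriction happens to suffice for every use of the lemma in this paper (in Corollary~\ref{cor:dofouterbound} the map $f$ is linear in $(Y_{\cal A},Z_{\cal A},X_{\cal S})$), but you should say explicitly that you are proving the affine case. Second, and more seriously, the assertion that the receivers in $\bar{\cal A}$ can ``still decode $W_{\bar{\cal A}}$ from the slightly noisier copy of $Y_{\bar{\cal A}}$'' is false as stated: a decoder designed for the channel with unit-variance noise, operating at a rate near that channel's capacity, can have error probability bounded away from zero when fed an observation with larger noise variance --- ``immaterial at the DoF scale'' confuses the capacity of the perturbed channel with the performance of the \emph{given} code on it. The step can be repaired, because in the affine case the perturbation $E=\tilde Y_{\bar{\cal A}}-Y_{\bar{\cal A}}$ is a linear image of $Z_{\cal A}$, hence independent of $(X_{[K]},Z_{\bar{\cal A}})$ with variance independent of $P$; one then writes $H(W_{\bar{\cal A}}\mid \tilde Y_{\bar{\cal A}})\le H(W_{\bar{\cal A}}\mid Y_{\bar{\cal A}})+I(W_{\bar{\cal A}};Y_{\bar{\cal A}}\mid \tilde Y_{\bar{\cal A}})\le n\epsilon_n+\entropy{Y_{\bar{\cal A}}-\tilde Y_{\bar{\cal A}}}-\entropy{Y_{\bar{\cal A}}\mid \tilde Y_{\bar{\cal A}},W_{[K]}}=n\epsilon_n+O(n)$ uniformly in $P$, which is the $o(n\log P)$ slack you need. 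Without some such argument the chain from ``reconstruct a perturbed $Y_{\bar{\cal A}}$'' to ``recover $W_{\bar{\cal A}}$'' does not close; with it, your proof is correct for affine $f$ and your final accounting $I(W_{[K]};Y_{\cal A},Z_{\bar{\cal A}})\le \entropy{Y_{\cal A}}-\entropy{Z_{\cal A}}$ is fine.
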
 
\begin{proof}
The proof is available in~\cite{ElGamal-Annapureddy-Veeravalli-ICC12}.
\end{proof}
Now, we prove the following corollary.
\begin{cor}\label{cor:dofouterbound}
For any $m,\bar{m} : m+\bar{m} \geq K$, if there exists a set ${\cal S}$ of indices for transmitters carrying no more than $m$ messages, and $|{\cal S}|=\bar{m}$, then $\eta \leq m$, or more precisely, 
\begin{equation}
\eta \leq \min_{{\cal S} \subseteq [K]}\max (|C_{\cal S}|, K-|{\cal S}|).
\end{equation} 
\end{cor}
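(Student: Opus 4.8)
The plan is to derive the inequality $\eta \le \min_{{\cal S}\subseteq[K]}\max(|C_{\cal S}|,K-|{\cal S}|)$ from Lemma~\ref{lem:dofouterbound}, and then read off the statement $\eta\le m$ as a special case. The one technical ingredient I would isolate first is the following strengthening of how Lemma~\ref{lem:dofouterbound} gets applied: for generic channel coefficients, the reconstruction function $f$ is guaranteed to exist as soon as $|{\cal A}|\ge|U_{\cal A}|$. Indeed, writing $Y_i(t)=\sum_{j\in U_{\cal A}}H_{ij}(t)X_j(t)+\sum_{j\in \bar U_{\cal A}}H_{ij}(t)X_j(t)+Z_i(t)$ for each $i\in{\cal A}$ yields, at each time $t$, a linear system of $|{\cal A}|$ equations in the $|U_{\cal A}|$ unknowns $\{X_j(t):j\in U_{\cal A}\}$ whose right-hand side is a function of $(Y_{\cal A},Z_{\cal A},X_{\bar U_{\cal A}})$; since the coefficient matrix $[H_{ij}(t)]_{i\in{\cal A},\,j\in U_{\cal A}}$ has generic entries and at least as many rows as columns, it has full column rank for almost all realizations, so $X_{U_{\cal A}}$ is recovered as a (realization-dependent) function $f$ of those quantities.

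Next I would fix an arbitrary ${\cal S}\subseteq[K]$ and let ${\cal A}$ be any subset of $[K]$ with $C_{\cal S}\subseteq{\cal A}$ and $|{\cal A}|=\max(|C_{\cal S}|,K-|{\cal S}|)$; such a set exists because $|C_{\cal S}|\le\max(|C_{\cal S}|,K-|{\cal S}|)\le K$. By the definition of $C_{\cal S}$, every $i\notin C_{\cal S}$ has ${\cal T}_i\cap{\cal S}=\emptyset$, i.e.\ ${\cal T}_i\subseteq\bar{\cal S}$; together with the monotonicity ${\cal A}_1\subseteq{\cal A}_2\Rightarrow U_{{\cal A}_2}\subseteq U_{{\cal A}_1}$ this gives $U_{\cal A}=\bigcup_{i\notin{\cal A}}{\cal T}_i\subseteq\bigcup_{i\notin C_{\cal S}}{\cal T}_i\subseteq\bar{\cal S}$, hence $|U_{\cal A}|\le|\bar{\cal S}|=K-|{\cal S}|\le|{\cal A}|$. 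The first step then supplies the function $f$, so Lemma~\ref{lem:dofouterbound} yields $\eta\le|{\cal A}|=\max(|C_{\cal S}|,K-|{\cal S}|)$. Taking the minimum over all ${\cal S}$ proves the displayed bound; and if ${\cal S}$ has $|{\cal S}|=\bar m$ and $|C_{\cal S}|\le m$, then, using $K-\bar m\le m$, we get $\eta\le\max(m,K-\bar m)=m$.

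The only genuinely delicate point is the genericity claim in the first step: one must check that the set of bad channel realizations — those for which $[H_{ij}(t)]_{i\in{\cal A},\,j\in U_{\cal A}}$ drops column rank for some time index $t$ — has Lebesgue measure zero. This holds because, for each fixed $t$, the rank deficiency forces the vanishing of every $|U_{\cal A}|\times|U_{\cal A}|$ minor, each of which is a polynomial in the independent, continuously distributed coefficients that is not identically zero (it has a nonvanishing monomial), so its zero set has measure zero; a countable union over $t$ of measure-zero sets is again measure zero, so the bound holds for almost all realizations as required. Everything else is bookkeeping with the definitions of $C_{\cal S}$, $U_{\cal A}$ and $\bar U_{\cal A}$.
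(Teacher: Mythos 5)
Your proof is correct and follows essentially the same route as the paper's: apply Lemma~\ref{lem:dofouterbound} with ${\cal A}$ equal to $C_{\cal S}$ padded up to size $\max(|C_{\cal S}|,K-|{\cal S}|)$, observe $U_{\cal A}\subseteq\bar{\cal S}$, and recover the unknown transmit signals by inverting a generically full-rank linear system. The only cosmetic differences are that you solve directly for $X_{U_{\cal A}}$ where the paper solves for the superset $X_{\bar{\cal S}}$, and that you spell out the measure-zero genericity argument the paper leaves implicit.
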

\begin{proof}
We apply Lemma~\ref{lem:dofouterbound} with the set ${\cal A}$ defined as follows: 

Initially, set ${\cal A}$ as the set of indices for messages carried by transmitters with indices in ${\cal S}$. i.e., ${\cal A}=C_{\cal S}$. Now, if $|{\cal A}| <  K-|{\cal S}|$, then augment the set ${\cal A}$ with arbitrary message indices such that $|{\cal A}|=K-|{\cal S}|$.

We now note that the above construction guarantees that $|{\cal A}|+|{\cal S}| \geq K$ and that $U_{\cal A} \subseteq \bar{\cal S}$, hence, using Lemma~\ref{lem:dofouterbound}, it suffices to show the existence of a function $f$ such that $f(Y_{\cal A},Z_{\cal A},X_{\cal S})=X_{\bar{\cal S}}$. 

Since the channel is fully connected, by removing the Gaussian noise signals $Z_{\cal A}$ and transmit signals in $X_{\cal S}$ from received signals in $Y_{\cal A}$, we obtain at least $K-|{\cal S}|=|\bar{\cal S}|$ linear equations in the transmit signals in $X_{\bar{\cal S}}$. Moreover, since the channel coefficients are generic, those equations will be linearly independent with high probability, and hence, we can reconstruct $X_{\bar{\cal S}}$ from $|\bar{\cal S}|$ linearly independent equations.
\end{proof}
Please refer to Figure~\ref{fig:dofouterbound} for an example illustration of Lemma~\ref{lem:dofouterbound}.
\begin{figure}[htb]
\centering
\includegraphics[width=0.5\columnwidth]{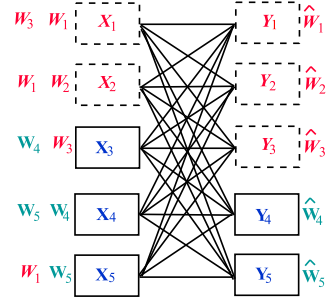}
\caption{Example application of Lemma~\ref{lem:dofouterbound}. ${\cal S}=\{1,2\}$, $C_{\cal S}=\{1,2,3\}$. Transmit signals with indices in ${\cal S}$, and messages as well as receive signals with indices in $C_{\cal S}$ are shown in tilted red font and dashed boxes. $\eta \leq |C_{\cal S}|=K-|{\cal S}|=3$.}
\label{fig:dofouterbound}
\end{figure}
\section{Local Cooperation}
In an attempt to reduce the complexity of the problem of finding an optimal message assignment strategy, we begin by considering in this section, message assignment strategies satisfying a local cooperation constraint, that is, for some $r(K)=o(K)$,
\begin{equation}
{\cal T}_{i,K} \subseteq \{i-r(K),i-r(K)+1,\ldots,i+r(K)\}, \forall i \in [K], \forall K\in {\bf Z}^{+}
\end{equation}  
Let $\tau_{\scriptscriptstyle \mathrm{LOC}}(M)$ be the maximum achievable asymptotic per user DoF number $\tau(M)$ under the additional local cooperation constraint, then, 

\begin{thm}
\begin{equation}
\tau_{\scriptscriptstyle \mathrm{LOC}}(M) = \frac{1}{2}, \text{ for all } M.
\end{equation}
\end{thm}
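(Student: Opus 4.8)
The plan is to prove the two inequalities $\tau_{\scriptscriptstyle \mathrm{LOC}}(M)\ge\frac12$ and $\tau_{\scriptscriptstyle \mathrm{LOC}}(M)\le\frac12$ separately. The first is immediate: the trivial assignment ${\cal T}_{i,K}=\{i\}$ satisfies the local cooperation constraint with radius function $r(K)=0=o(K)$, and it satisfies $|{\cal T}_{i,K}|\le M$ for every $M\ge1$. Since enlarging a transmit set can never decrease the achievable DoF (the extra transmitters can simply ignore the extra message), this assignment already achieves $\tau(1)=\frac12$, the value stated in the excerpt to follow from the bounds of~\cite{Madsen-Nosratinia} and~\cite{Cadambe-IA}. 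Hence $\tau_{\scriptscriptstyle \mathrm{LOC}}(M)\ge\frac12$, and the entire content of the theorem is the converse $\tau_{\scriptscriptstyle \mathrm{LOC}}(M)\le\frac12$.

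For the converse, fix any message assignment strategy obeying the local cooperation constraint, so that there is an $r(K)=o(K)$ with ${\cal T}_{i,K}\subseteq\{i-r(K),\dots,i+r(K)\}\cap[K]$ for all $i$ and $K$. For each $K$ I would apply Corollary~\ref{cor:dofouterbound} to the \emph{contiguous} block of transmitter indices ${\cal S}=\{1,2,\dots,\lfloor K/2\rfloor\}$. The reason for taking ${\cal S}$ to be an interval is that locality then confines $C_{\cal S}$ to a slightly larger interval: if ${\cal T}_{i,K}\cap{\cal S}\neq\phi$ then $\{i-r(K),\dots,i+r(K)\}$ meets $\{1,\dots,\lfloor K/2\rfloor\}$, which forces $i\le\lfloor K/2\rfloor+r(K)$, so $C_{\cal S}\subseteq\{1,\dots,\lfloor K/2\rfloor+r(K)\}$ and therefore $|C_{\cal S}|\le\lfloor K/2\rfloor+r(K)$. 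Since also $K-|{\cal S}|=\lceil K/2\rceil$, Corollary~\ref{cor:dofouterbound} gives, for the DoF number $\eta$ of the $K$-user channel under this strategy,
\begin{equation}
\eta\;\le\;\max\!\left(\left\lfloor\tfrac{K}{2}\right\rfloor+r(K),\ \left\lceil\tfrac{K}{2}\right\rceil\right)\;\le\;\frac{K}{2}+r(K)+1 .
\end{equation}

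Dividing by $K$ and letting $K\to\infty$, the term $r(K)/K$ vanishes because $r(K)=o(K)$, so this strategy achieves a per user DoF number at most $\frac12$; taking the supremum over all locally cooperating strategies yields $\tau_{\scriptscriptstyle \mathrm{LOC}}(M)\le\frac12$, which together with the lower bound proves the theorem.

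The only substantive step is the converse, and within it the single idea is the choice of ${\cal S}$: one interval of about $K/2$ transmitters has $|{\cal S}|\approx K/2$ and, thanks to locality, carries only $|C_{\cal S}|\lesssim K/2+r(K)\approx K/2$ messages, which is precisely the balance needed so that $\max(|C_{\cal S}|,K-|{\cal S}|)$ returned by Corollary~\ref{cor:dofouterbound} is of order $K/2$. The points needing mild care are the boundary effects (intersecting the radius-$r(K)$ window with $[K]$, which only shrinks $C_{\cal S}$) and noting that the bound must be invoked once per admissible strategy with that strategy's own radius function — but since ``local'' means $r(K)=o(K)$ for the strategy in question, this causes no trouble. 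I do not anticipate a genuine obstacle; the real work was done in establishing Lemma~\ref{lem:dofouterbound} and Corollary~\ref{cor:dofouterbound}.
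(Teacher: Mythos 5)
Your proposal is correct and follows essentially the same route as the paper: both apply Corollary~\ref{cor:dofouterbound} to a contiguous half-block of transmitters (the paper uses ${\cal S}=\{1,\dots,\lceil K/2\rceil\}$, you use the floor, which is immaterial), use locality to confine $C_{\cal S}$ to an interval of length about $K/2+r(K)$, and let $r(K)/K\to 0$; the achievability direction likewise comes from the no-cooperation interference alignment result. No gaps.
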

\begin{proof}
Fix $M \in {\bf Z}^{+}$. For any value of $K\in {\bf Z}^{+}$, we use Corollary~\ref{cor:dofouterbound} with the set ${\cal S}=\{1,2,\ldots,\lceil \frac{K}{2} \rceil \}$. Note that $C_{\cal S} \subseteq \{1,2,\ldots, \lceil \frac{K}{2} \rceil + r(K)\}$, and hence, it follows that $\eta(K,M) \leq \lceil \frac{K}{2} \rceil + r(K)$. Finally, $\tau(M) = \lim_{K \rightarrow \infty} \frac{\eta(K,M)}{K} \leq \frac{1}{2}$. The lower bound follows from~\cite{Cadambe-IA} without cooperation.
\end{proof}
\section{Asymptotic DoF Cooperation Gain}
In this section, we investigate if it is possible for the cooperation gain to scale linearly with $K$ for fixed $M$. More precisely, we try to investigate whether $\tau(M) > \tau(1)=\frac{1}{2}$ for values of $M > 1$. In the last section, we showed that such a gain is not possible for message assignment strategies that satisfy a local cooperation constraint. At the end of this section, we prove a property for message assignment strategies that may lead to a value of $\tau(M) > \frac{1}{2}$ for $M>2$. We start by proving the following upper bound on $\tau(M)$ that is tight enough for finding $\tau(2)$.
\begin{thm}\label{thm:tauouterbound}
\begin{equation}
\tau(M) \leq \frac{M-1}{M}
\end{equation}
\end{thm}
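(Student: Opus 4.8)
The plan is to apply Corollary~\ref{cor:dofouterbound} to a cleverly chosen subset ${\cal S}$ of transmitters, and then take $K\to\infty$. The key quantity we must control is $|C_{\cal S}|$, the number of messages touching ${\cal S}$; equivalently, $\bar{\cal S}$ must avoid carrying too many messages, i.e. $|C_{\bar{\cal S}}|$ should be large. Since each message is available at at most $M$ transmitters, a single transmitter added to ${\cal S}$ can bring in at most... well, that naive count is not quite what we need; instead the right idea is to build ${\cal S}$ \emph{greedily}. Partition the $K$ transmitters (and, correspondingly, think of the $K$ messages) into consecutive blocks of size $M$, giving roughly $K/M$ blocks. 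In each block I would like to select $M-1$ transmitters for ${\cal S}$ and leave one out, arguing that the $M-1$ selected transmitters together carry at most $M-1$ "new" messages while the leftover transmitter, together with the structure, lets us keep $|C_{\cal S}|$ down to about $\frac{M-1}{M}K$.

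More carefully, the argument I would run is the following counting/averaging step. Fix a message assignment with cooperation order $M$. For a uniformly random subset ${\cal S}\subseteq[K]$ obtained by independently including each transmitter with probability $p=\frac{M-1}{M}$, a fixed message $W_i$ fails to be in $C_{\bar{\cal S}}$ (i.e. all of its $\le M$ host transmitters land in ${\cal S}$) with probability at most $p^{|{\cal T}_i|}\le p$ only in the worst case $|{\cal T}_i|=1$ — so this crude bound is too weak, and I expect this to be the main obstacle: controlling messages with small transmit sets. The fix is to first dispense with messages assigned to a single transmitter: if message $W_i$ has $|{\cal T}_i|=1$, then in the DoF upper bound those behave essentially like the no-cooperation case and can be grouped with a deterministic part of ${\cal S}$. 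So I would split $[K]$ into $A=\{i:|{\cal T}_i|\ge 2\}$ and its complement, handle the complement by putting its (few, relative to the structure) host transmitters outside ${\cal S}$, and for $i\in A$ use $p^{|{\cal T}_i|}\le p^2=\left(\frac{M-1}{M}\right)^2$. Hmm — that gives $\tau\le 1-(1-p)^2/(\ldots)$, not cleanly $\frac{M-1}{M}$.

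The cleaner route, which I would ultimately take, is deterministic and block-based. Take $n=\lceil K/M\rceil$ and for each $j\in\{0,1,\ldots\}$ remove transmitter $jM+1$ from the candidate pool, i.e. set ${\cal S}=[K]\setminus\{1,M+1,2M+1,\ldots\}$, so $|{\cal S}|=K-\lceil K/M\rceil$ and $|\bar{\cal S}|=\lceil K/M\rceil$. This does not by itself bound $|C_{\cal S}|$, so instead one shows by a matching/Hall-type or exchange argument that among \emph{all} message assignments with order $M$, one can relabel so that each block of $M$ consecutive transmitters is "responsible" for $M-1$ messages in a way that $C_{\cal S}$ omits one message per block; this yields $|C_{\cal S}|\le K-\lceil K/M\rceil$. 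Then $m=|C_{\cal S}|\le K-\lceil K/M\rceil$, $\bar m=|{\cal S}|=K-\lceil K/M\rceil$, so $m+\bar m\ge K$ holds for $K$ large, and Corollary~\ref{cor:dofouterbound} gives $\eta(K,M)\le \max(|C_{\cal S}|,K-|{\cal S}|)\le K-\lceil K/M\rceil$. Dividing by $K$ and letting $K\to\infty$ yields $\tau(M)\le 1-\frac{1}{M}=\frac{M-1}{M}$. The delicate point — and the step I would write out most carefully — is justifying that the greedy/exchange construction of ${\cal S}$ is valid for \emph{every} assignment of messages to transmitters of size $\le M$, not merely for nicely structured ones; this is where the $\min$ over ${\cal S}$ in Corollary~\ref{cor:dofouterbound} does the work, since we only need the existence of one good ${\cal S}$ per $K$.
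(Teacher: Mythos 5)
Your proposal has a genuine gap at its central step, and the claim on which it rests is in fact false for $M\geq 3$. You choose a \emph{large} set ${\cal S}=[K]\setminus\{1,M+1,2M+1,\ldots\}$ with $|{\cal S}|=K-\lceil K/M\rceil$ and assert, via an unspecified ``matching/Hall-type or exchange argument,'' that $|C_{\cal S}|\leq K-\lceil K/M\rceil$. Equivalently, you are claiming that for \emph{every} assignment of cooperation order $M$ there is a set ${\cal R}=\bar{\cal S}$ of only $\lceil K/M\rceil$ transmitters that \emph{exclusively} hosts at least $\lceil K/M\rceil$ messages (all of whose transmit sets lie inside ${\cal R}$). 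Count edges in the bipartite graph for an $M$-regular assignment: ${\cal R}$ is incident to at most $M|{\cal R}|=K$ edges, and each fully-contained message consumes $M$ of them, so equality is forced --- every edge leaving ${\cal R}$ must go to a message entirely supported on ${\cal R}$, i.e., ${\cal R}\cup N_G({\cal R})$ must be a union of connected components. For the expander assignments of Theorem~\ref{thm:pinsker} (which the paper invokes precisely because they defeat such bounds), every set of $K/M\leq K/2$ transmitters has strictly more than $K/M$ neighbors, and a short count shows the number of fully-contained messages is bounded away from $K/M$ by a constant factor; hence no relabeling or exchange can produce your ${\cal S}$. Note also that relabeling cannot help in principle: the bound must hold for every assignment, and permuting indices does not change the bipartite structure.

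The fix is to aim for a \emph{small} set instead: greedily build ${\cal S}$ of size $n=\frac{K-1}{M}$ one transmitter at a time. The basis is the pigeonhole fact $\sum_{i}|C_{\{i\}}|=\sum_i|{\cal T}_i|\leq MK$, so some transmitter carries at most $M$ messages; the induction step shows (again by counting message instances outside $C_{\cal A}$) that one can always add a transmitter contributing at most $M-1$ new messages on top of the running bound, giving $|C_{\cal S}|\leq (M-1)n+1=K-n$. Then Corollary~\ref{cor:dofouterbound} yields $\eta(K,M)\leq \frac{K(M-1)+1}{M}+O(1)$ and hence $\tau(M)\leq\frac{M-1}{M}$. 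Your probabilistic sketch in the second paragraph is correctly abandoned, but the deterministic route you settle on does not close the argument: the existence of the required set is exactly the hard combinatorial content of the theorem, and it must be established by the averaging/induction argument rather than asserted.
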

Before proving the above Theorem, we need the following auxiliary lemmas,
\begin{lem}\label{lem:basis}
\begin{equation}
\text{There exists } i \in [K] \text{ such that } |C_{\{i\}}| \leq M.
\end{equation}
\end{lem}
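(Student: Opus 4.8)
The plan is to prove Lemma~\ref{lem:basis} by a simple averaging (double-counting) argument. Each message $W_i$ is assigned to at most $M$ transmitters, so the total number of (message, transmitter) incidence pairs $\sum_{i\in[K]}|{\cal T}_i|$ is at most $MK$. On the other hand, this same sum counts, for each transmitter $j\in[K]$, the number of messages it carries, namely $|C_{\{j\}}|$; that is, $\sum_{j\in[K]}|C_{\{j\}}| = \sum_{i\in[K]}|{\cal T}_i| \le MK$. Since there are $K$ transmitters, the average value of $|C_{\{j\}}|$ over $j\in[K]$ is at most $M$, so there must exist at least one index $i\in[K]$ with $|C_{\{i\}}| \le M$.

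The key step is therefore just the interchange of summation order in the incidence count: $\sum_{i}|{\cal T}_i|$ counts pairs $(i,j)$ with $j\in{\cal T}_i$, equivalently pairs $(i,j)$ with $i\in C_{\{j\}}$, which equals $\sum_{j}|C_{\{j\}}|$. Combined with the cooperation order constraint $|{\cal T}_i|\le M$ from~\eqref{eq:coop_order}, this gives $\sum_{j}|C_{\{j\}}|\le MK$, and the pigeonhole principle finishes it: if $|C_{\{j\}}|\ge M+1$ for every $j$, the sum would be at least $(M+1)K>MK$, a contradiction.

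There is essentially no obstacle here; the statement is a purely combinatorial consequence of the definitions, and the only thing to be careful about is matching the index conventions ($C_{\{j\}}=\{i:{\cal T}_i\cap\{j\}\ne\phi\}$ is exactly the set of messages present at transmitter $j$). I would present it in two or three lines. The reason the lemma is placed here is presumably that it provides the induction base / "there is always a lightly loaded transmitter" fact needed to peel off users one at a time in the proof of Theorem~\ref{thm:tauouterbound}, so the proof need not be longer than the averaging argument itself.
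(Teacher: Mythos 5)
Your argument is correct and is exactly the paper's proof: the same double-counting identity $\sum_{j}|C_{\{j\}}| = \sum_{i}|{\cal T}_i| \le MK$ followed by the pigeonhole principle. No differences worth noting.
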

\begin{proof}
The statement follows by the pigeonhole principle, since
\begin{equation}
\sum_{i=1}^{K} |C_{\{i\}}| = \sum_{i=1}^{K} |{\cal T}_i| \leq MK
\end{equation} 
\end{proof}
\begin{lem}\label{lem:inductionstep}
For $M\geq 2$, if $\exists {\cal A} \subset [K]$ such that $|{\cal A}|=n < K$, and $|C_{\cal A}| \leq (M-1)n+1$, then $\exists {\cal B} \subseteq [K]$ such that $|{\cal B}|=n+1$, and $|C_{\cal B}| \leq (M-1)(n+1)+1$. 
\end{lem}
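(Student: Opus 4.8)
The plan is to obtain ${\cal B}$ by adjoining a single, carefully chosen transmitter to ${\cal A}$, i.e.\ ${\cal B} = {\cal A}\cup\{j\}$ with $j\in\overline{\cal A}$ (which is nonempty since $n<K$), locating $j$ by an averaging argument. The key observation is that $C_{{\cal A}\cup\{j\}} = C_{\cal A}\cup\{i: j\in{\cal T}_i\}$, so passing from ${\cal A}$ to ${\cal A}\cup\{j\}$ creates exactly those messages $i$ with $j\in{\cal T}_i$ and ${\cal T}_i\cap{\cal A}=\emptyset$, and every such message has its entire transmit set inside $\overline{\cal A}$. Throughout, write $c=|C_{\cal A}|$, so $c\leq(M-1)n+1$ by hypothesis.

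First I would split off a trivial regime: if $(M-1)(n+1)+1\geq K$, then any $(n+1)$-element subset ${\cal B}$ of $[K]$ already works, since $|C_{\cal B}|\leq K\leq(M-1)(n+1)+1$. So assume instead $(M-1)(n+1)+1<K$. Writing $K=n+d$, this inequality reads $d\geq(M-2)n+M+1$, and in particular $d\geq M+1$; this ``size gap'' is what will make the averaging bound tight enough.

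The averaging step is routine: summing over $j\in\overline{\cal A}$ and using the opening observation,
\[
\sum_{j\in\overline{\cal A}}|C_{{\cal A}\cup\{j\}}| = (K-n)\,c + \sum_{i\,:\,{\cal T}_i\cap{\cal A}=\emptyset}|{\cal T}_i| \;\leq\; (K-n)\,c + M(K-c),
\]
since there are exactly $K-c$ messages $i$ with ${\cal T}_i\cap{\cal A}=\emptyset$ and each such $|{\cal T}_i|$ is at most $M$. Dividing by $|\overline{\cal A}|=K-n$, and using that $|C_{{\cal A}\cup\{j\}}|$ is an integer, there is a $j\in\overline{\cal A}$ with $|C_{{\cal A}\cup\{j\}}|\leq c+\lfloor M(K-c)/(K-n)\rfloor$; set ${\cal B}={\cal A}\cup\{j\}$ for this $j$.

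The remaining work, and the one genuine obstacle, is the purely arithmetic claim that $c+\lfloor M(K-c)/(K-n)\rfloor\leq(M-1)(n+1)+1$ --- equivalently, after clearing the floor, $(K-n)c+M(K-c)<(K-n)[(M-1)(n+1)+2]$ --- given $c\leq(M-1)n+1$ and $d=K-n\geq M+1$. I would bring everything to one side, substitute $K=n+d$, and observe that the resulting expression is decreasing in $c$ (the coefficient of $c$ works out to $M-d<0$), so it suffices to check it at the extreme value $c=(M-1)n+1$; there the $nd$-terms cancel and what is left is exactly $M[n(M-2)+1]$, which is positive for $M\geq2$. Note that both pieces of the argument are essential: the bare averaging bound misses the target by a bounded additive amount, so one needs both the separate treatment of the regime $(M-1)(n+1)+1\geq K$ and the integrality of $|C_{\cal B}|$ to close the gap.
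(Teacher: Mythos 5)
Your proof is correct and follows essentially the same route as the paper's: the same split into the trivial regime $K\leq(M-1)(n+1)+1$, the same double-counting of message instances outside $C_{\cal A}$ (your averaging over $j\in\overline{\cal A}$ is the paper's pigeonhole in different clothing), and the same key inequality $M(K-c)<(K-n)\left[(M-1)(n+1)+2-c\right]$, proved in both cases by monotonicity in $c$ and checking the extreme $c=(M-1)n+1$.
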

\begin{proof}
We only consider the case where $K > (M-1)(n+1)+1$, as otherwise, the statement trivially holds. In this case, we can show that,
\begin{equation}\label{eq:inequalityone}
M(K-|C_{\cal A}|) < (K-n)((M-1)(n+1)+2-|C_{\cal A}|)
\end{equation}
The proof of ~\eqref{eq:inequalityone} is available in the Appendix. Note that the left hand side in the above equation is the maximum number of message instances for messages outside the set $C_{\cal A}$, i.e.,
\begin{eqnarray}
\sum_{i\in[K], i \notin {\cal A}} |C_{\{i\}}| &\leq& M(K-|C_{\cal A}|)\nonumber
\\&<&(K-n)((M-1)(n+1)+2-|C_{\cal A}|)\nonumber
\\
\end{eqnarray}
Since the number of transmitters outside the set ${\cal A}$ is $K-n$, it follows by the pigeonhole principle that there exists a transmitter whose index is outside ${\cal A}$ and carries at most $(M-1)(n+1)+1-|C_{\cal A}|$ messages whose indices are outside $C_{\cal A}$. More precisely,
\begin{equation}
\exists i \in [K]\backslash{\cal A}: |C_{\{i\}}\backslash{C_{\cal A}}| \leq (M-1)(n+1)+1-|C_{\cal A}|
\end{equation}
It follows that there exists a transmitter whose index is outside the set ${\cal A}$ and can be added to the set ${\cal A}$ to form the set ${\cal B}$ that satisfies the statement.  
\end{proof}
We now prove Theorem~\ref{thm:tauouterbound}. In particular, we show that the following lemma holds.
\begin{lem}
For $M\geq 2$,
\begin{equation}
\eta(K,M) \leq \frac{K(M-1)+M+1}{M}  
\end{equation}
\begin{proof}
Assume that $n=\frac{K-1}{M}$ is an integer. We know by induction from lemmas~\ref{lem:basis} and~\ref{lem:inductionstep} that $\exists S \subset [K]$, $|S|=n$, $|C_{\cal S}| \leq (M-1)n+1=\frac{K(M-1)+1}{M}=K-|{\cal S}|$. Now, applying Corollary~\ref{cor:dofouterbound} proves that $\eta(K,M) \leq \frac{K(M-1)+1}{M}$. For the case where $\frac{K-1}{M}$ is not an integer, let $x$ be the largest integer less than $K$ such that $\frac{x-1}{M}$ is an integer. Now, we ignore the last $K-x$ users and bound the sum DoF for the remaining users by $\frac{x(M-1)+1}{M}$ to show that $\eta(K,M) \leq \frac{x(M-1)+1}{M}+(K-x)$, and hence,
\begin{eqnarray}
\eta(K,M) &\leq& \frac{x(M-1)+1}{M}+(K-x)\nonumber
\\&=& \frac{K(M-1)+1}{M}+\frac{K-x}{M}\nonumber
\\&\leq& \frac{K(M-1)+1}{M}+1
\end{eqnarray}  
\end{proof}
\end{lem}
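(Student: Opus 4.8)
The plan is to build, by induction on the size of a transmitter set, a set $\mathcal{S}$ of about $K/M$ transmitters that together carry at most $K-|\mathcal{S}|$ messages, and then feed $\mathcal{S}$ into Corollary~\ref{cor:dofouterbound}. The invariant I would carry through the induction is ``$|C_{\mathcal{A}}| \le (M-1)|\mathcal{A}| + 1$'': this is precisely what Lemma~\ref{lem:basis} gives at size $1$ (since $|C_{\{i\}}| \le M = (M-1)\cdot 1 + 1$ for some $i$) and precisely what Lemma~\ref{lem:inductionstep} propagates from one size to the next, so the two auxiliary lemmas chain together with no extra work.

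First I would treat the clean case in which $n := \frac{K-1}{M}$ is a nonnegative integer, and may assume $K > (M-1)(n+1)+1$ since the claimed bound is trivial otherwise. Iterating Lemma~\ref{lem:inductionstep} starting from the singleton of Lemma~\ref{lem:basis} produces a set $\mathcal{S}$ with $|\mathcal{S}| = n$ and $|C_{\mathcal{S}}| \le (M-1)n + 1$. A one-line computation gives $(M-1)n + 1 = \frac{(M-1)(K-1)+M}{M} = \frac{K(M-1)+1}{M} = K - n$, so $\mathcal{S}$ is a set of $\bar m := |\mathcal{S}| = n$ transmitters carrying $m := |C_{\mathcal{S}}| \le \frac{K(M-1)+1}{M}$ messages, with $m + \bar m \ge K$. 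Corollary~\ref{cor:dofouterbound} then yields $\eta(K,M) \le \max(|C_{\mathcal{S}}|, K-|\mathcal{S}|) = \frac{K(M-1)+1}{M}$, which is in fact slightly sharper than the stated bound.

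For general $K$ I would reduce to the divisible case. Let $x$ be the largest integer strictly less than $K$ with $\frac{x-1}{M} \in \mathbf{Z}$; since the residues $1 \pmod M$ are spaced $M$ apart, $K - x \le M$. Applying the bound just proved to the sub-network on the first $x$ users, and charging each of the remaining $K-x$ users at most one degree of freedom, gives
\begin{equation*}
\eta(K,M) \le \frac{x(M-1)+1}{M} + (K-x) = \frac{K(M-1)+1}{M} + \frac{K-x}{M} \le \frac{K(M-1)+1}{M} + 1 = \frac{K(M-1)+M+1}{M},
\end{equation*}
which is the claim. (That the sub-network bound may be applied with the remaining users ignored is immediate from Corollary~\ref{cor:dofouterbound}, since a set $\mathcal{S}$ of transmitters for the $x$-user problem is also a valid set for the $K$-user problem, after which the $K-x$ extra receivers each contribute at most $1$.)

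The step I expect to need the most care is not any individual inequality — Lemmas~\ref{lem:basis} and~\ref{lem:inductionstep}, together with the counting estimate~\eqref{eq:inequalityone} behind the latter, already do the real combinatorial work — but rather choosing the invariant $|C_{\mathcal{A}}| \le (M-1)|\mathcal{A}|+1$ with exactly the right additive constant so that at size $n = \frac{K-1}{M}$ the bound $(M-1)n+1$ lands exactly on $K-n$, making Corollary~\ref{cor:dofouterbound} applicable with $m+\bar m = K$ and no slack to spare. The rest is bookkeeping: tracking the invariant through the induction and handling the divisibility technicality for arbitrary $K$.
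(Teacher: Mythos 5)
Your proof is correct and follows essentially the same route as the paper: chain Lemma~\ref{lem:basis} and Lemma~\ref{lem:inductionstep} to build a set $\mathcal{S}$ of size $n=\frac{K-1}{M}$ with $|C_{\mathcal S}|\le (M-1)n+1=K-n$, apply Corollary~\ref{cor:dofouterbound}, and handle non-divisible $K$ by restricting to the first $x$ users and charging the rest one DoF each. One harmless imprecision: your aside that one ``may assume $K>(M-1)(n+1)+1$ since the claimed bound is trivial otherwise'' is not quite right (the DoF bound is only trivial when $K\le M+1$), but nothing is lost because Lemma~\ref{lem:inductionstep} as stated holds unconditionally, so the induction goes through regardless.
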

Together with the achievability result in~\cite{Cadambe-IA}, The statement in Theorem~\ref{thm:tauouterbound} implies the following corollary.
\begin{cor}
\begin{equation}
\tau(2) = \frac{1}{2}
\end{equation}
\end{cor}
The characterization of $\tau(M)$ for values of $M > 2$ remains an open question, as Theorem~\ref{thm:tauouterbound} is only an upper bound. Moreover, the following result shows that this upper bound is loose for $M=3$. 
\begin{thm}
\begin{equation}
\tau(3) \leq \frac{5}{8}
\end{equation}
\end{thm}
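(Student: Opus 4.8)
The plan is to bound $\eta(K,3)$ for \emph{every} message assignment with cooperation order $M=3$ by invoking Corollary~\ref{cor:dofouterbound}. It suffices to exhibit a set ${\cal S}\subseteq[K]$ of transmitters with $|{\cal S}|\ge\lceil\tfrac{3K}{8}\rceil$ and $|C_{\cal S}|\le\tfrac{5K}{8}+O(1)$: then $\max(|C_{\cal S}|,\,K-|{\cal S}|)=K-|{\cal S}|\le\tfrac{5K}{8}+O(1)$, so $\eta(K,3)\le\tfrac{5K}{8}+O(1)$ and dividing by $K$ gives $\tau(3)\le\tfrac58$ (the matching lower bound is not claimed, and $\tau(3)\ge\tfrac12$ from~\cite{Cadambe-IA} already holds). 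It is convenient to view the assignment as a multi-hypergraph on the $K$ transmitters whose $K$ edges are the transmit sets $\{{\cal T}_i\}$, each of size at most $3$, with total incidence budget $\sum_i|{\cal T}_i|\le 3K$; the goal is to choose $\tfrac{3K}{8}$ transmitters whose union of incident edges has size at most $\tfrac{5K}{8}$, equivalently to leave at least $\tfrac{3K}{8}$ transmit sets entirely inside the $\tfrac{5K}{8}$ unchosen transmitters.

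The technical core is a sharpening of Lemma~\ref{lem:inductionstep}. That lemma, which drives Theorem~\ref{thm:tauouterbound}, grows a transmitter set while only guaranteeing an exchange rate of $M-1=2$ newly covered messages per added transmitter, which pins the construction at $|C_{\cal S}|\approx 2|{\cal S}|$ and hence at the crossover $|{\cal S}|\approx K/3$, i.e.\ the weaker bound $\eta\le\tfrac{2K}{3}$. To reach $\tfrac58$ one needs amortized exchange rate $\tfrac53$, since a set with $|C_{\cal S}|\le\tfrac53|{\cal S}|+O(1)$ hits the crossover $|C_{\cal S}|=K-|{\cal S}|$ at $|{\cal S}|=\tfrac{3K}{8}$. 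I would build ${\cal S}$ in short blocks, maintaining the invariant $|C_{\cal S}|\le\tfrac53|{\cal S}|+O(1)$, where a block that adds $3$ transmitters while covering at most $5$ new messages preserves it. For the block selection rule: at each stage, either some unchosen transmitter carries at most one not-yet-covered message, which we add at rate $\le 1$; or the unchosen transmitters carry, on average, close to two uncovered messages each, and then the incidence budget $\sum_i|{\cal T}_i|\le 3K$ together with the fact that uncovered messages lie entirely among the unchosen transmitters forces substantial overlap among those transmit sets, from which we extract a short block of transmitters that jointly carry few uncovered messages (amortized $\le\tfrac53$ each) precisely because those messages are shared within the block.

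The step I expect to be the main obstacle is making this ``dense case'' quantitatively tight: one must rule out assignments in which the size-$3$ transmit sets are spread so pseudorandomly over the transmitters that every short block of transmitters covers essentially the average ($\approx 2$) number of messages — exactly the regime in which the averaging argument behind Theorem~\ref{thm:tauouterbound} is stuck at $\tfrac23$. Handling it should require a careful double-counting (or an explicit extremal-configuration) argument showing the $3K$ incidence budget cannot be arranged to defeat a rate-$\tfrac53$ block at every stage up to $|{\cal S}|=\tfrac{3K}{8}$. An alternative, possibly cleaner route that avoids this combinatorial bottleneck is to apply Lemma~\ref{lem:dofouterbound} directly with an \emph{iterated} genie: give the receivers in a set ${\cal A}$ a subset of the messages, use the fully connected generic channel to reconstruct the corresponding transmit signals by linear algebra, decode further messages, and repeat, tracking the number of transmit signals peeled off against the number of messages decoded across rounds; I would expect the bookkeeping to again yield the factor $\tfrac58$ for $M=3$ while sidestepping any ad hoc extremal bound.
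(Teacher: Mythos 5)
Your framework is the right one --- apply Corollary~\ref{cor:dofouterbound} to a transmitter set of size roughly $\tfrac{3K}{8}$ covering at most $\tfrac{5K}{8}+O(1)$ messages --- and your arithmetic for why $\tfrac{3}{8}$ and $\tfrac{5}{3}$ are the relevant numbers is correct. But the core combinatorial step is missing, and the specific mechanism you propose for it would fail. You try to maintain the \emph{uniform} invariant $|C_{\cal S}|\leq\tfrac{5}{3}|{\cal S}|+O(1)$ from the very start, by repeatedly finding blocks of $3$ new transmitters covering at most $5$ new messages. In the early phase this is not guaranteed: when $|C_{\cal A}|$ is small, the uncovered messages contribute up to $3(K-|C_{\cal A}|)\approx 3(K-n)$ incidences spread over $K-n$ outside transmitters, so the pigeonhole only yields a transmitter covering at most $2$ new messages (average just below $3$), and a block of three such choices covers up to $6$, not $5$. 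Worse, the invariant itself is unattainable for some assignments: Pinsker-type $3$-regular bipartite graphs exist in which every transmitter set of size up to $\delta K$ has strictly more than $\tfrac{5}{3}$ times as many neighbors, so no selection rule can keep $|C_{\cal S}|\leq\tfrac{5}{3}|{\cal S}|+O(1)$ at intermediate linear scales. This is exactly the ``dense case'' you flag as the main obstacle and leave unresolved; the alternative iterated-genie sketch is likewise not carried out.

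The paper's resolution is a \emph{two-phase} greedy rather than a constant-rate one, and this is the idea you are missing. Phase one runs Lemma~\ref{lem:inductionstep} for $n$ up to $x_1=\tfrac{K+1}{4}$ steps at rate $2$ new messages per added transmitter, reaching $|C_{{\cal S}_1}|\leq 2x_1+1$. At that point the pool of uncovered messages has shrunk enough that their total incidence count $3(K-|C_{\cal A}|)$ falls below $2(K-n)$, so the same pigeonhole now yields a transmitter covering at most \emph{one} new message; this is Lemma~\ref{lem:mthreeinductionstep}, which sustains rate $1$ for another $x_2=\tfrac{K-7}{8}$ steps. The amortized rate over $|{\cal S}|=x_1+x_2=\tfrac{3K-5}{8}$ transmitters is then $\bigl(2x_1+x_2\bigr)/\bigl(x_1+x_2\bigr)\to\tfrac{5}{3}$, meeting the crossover $|C_{\cal S}|=K-|{\cal S}|$ at $\tfrac{5(K+1)}{8}$, after which Corollary~\ref{cor:dofouterbound} gives $\eta(K,3)\leq\tfrac{5(K+1)}{8}$. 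In short: the $\tfrac{5}{3}$ exchange rate is achieved only on average, as a mixture of an unavoidable rate-$2$ prefix and a rate-$1$ suffix whose onset is triggered by the depletion of uncovered message instances --- not block by block from the start.
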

\begin{proof}
We prove the statement by induction, and in order to do so, we use Lemma~\ref{lem:basis} to provide the basis, and for the induction step, we use Lemma~\ref{lem:inductionstep} together with the following lemma.
\begin{lem}\label{lem:mthreeinductionstep}
For $M=3$, If $\exists {\cal A} \subset [K]$ such that $|{\cal A}|=n$, and $\frac{K+1}{4} \leq n < K$, $|C_{\cal A}| \leq n + \frac{K+1}{4}+ 1$, then $\exists {\cal B} \subset [K]$ such that $|{\cal B}|=n+1$, $|C_{\cal B}| \leq n + \frac{K+1}{4} + 2$.
\end{lem}

The proof of the above Lemma follows in a similar fashion to that of Lemma~\ref{lem:inductionstep}. Let $x=n+\frac{K+1}{4} + 1$. We only consider the case where $K > x+1$, as otherwise, the proof is trivial. We first assume the following,
\begin{equation}\label{eq:inequalitytwo}
3(K-|C_{\cal A}|) < (K-n)\left(n+\frac{K+1}{4}+3-|C_{\cal A}|\right)
\end{equation}
Now, it follows that,
\begin{eqnarray}
\sum_{i\in[K], i \notin {\cal A}} |C_{\{i\}}| &\leq& M(K-|C_{\cal A}|)\nonumber
\\&<&(K-n)\left(n+\frac{K+1}{4}+3-|C_{\cal A}|\right),\nonumber
\\
\end{eqnarray}
and hence,
\begin{equation}
\exists i \in [K]\backslash{\cal A}: |C_{\{i\}}\backslash{C_{\cal A}}| \leq n+\frac{K+1}{4}+2-|C_{\cal A}|,
\end{equation}
and then the set ${\cal B}={\cal A}\cup \{i\}$ satisfies the statement of the lemma. Finally, we need to show that~\eqref{eq:inequalitytwo} is true. For the case where $|C_{\cal A}|=x$,
\begin{eqnarray}
3x &=& \frac{3K}{4} + \frac{15}{4} + 3n\nonumber
\\&=& (2n+K) + (n - \frac{K}{4} + \frac{15}{4})\nonumber
\\&>& 2n+K,
\end{eqnarray}
and hence, $3(K-x) < 2(K-n)$, which implies~\eqref{eq:inequalitytwo} for the case where $|C_{\cal A}|=x$. Moreover, we note that each decrement of $|C_{\cal A}|$ increases the left hand side of~\eqref{eq:inequalitytwo} by $3$ and the right hand side by $(K-n)$, and we know that,
\begin{eqnarray}
K &>& x+1\nonumber
\\&=& n+\frac{K+1}{4}+2\nonumber
\\&\geq& n+2,
\end{eqnarray}
and hence, $K-n \geq 3$, so there is no loss of generality in assuming that $|C_{\cal A}|=x$ in the proof of~\eqref{eq:inequalitytwo}, and the statement of Lemma~\ref{lem:mthreeinductionstep} holds.

Now, we show that $\tau(3)=\lim_{K \rightarrow \infty} \frac{\eta(K,3)}{K} \leq \frac{5}{8}$. It suffices to show that $\eta(K,3) \leq \frac{5K}{8}+o(K)$ for all values of $K$ such that $\frac{K+1}{4}$ is an even positive integer, and hence, we make that assumption for $K$. Define the following,
\begin{equation}
x_1 = \frac{K+1}{4}
\end{equation}
\begin{equation}
x_2=\frac{K-7}{8}
\end{equation}
\begin{equation}
x_3 = 2 x_1 + 1 + x_2
\end{equation}
Now, we note that,
\begin{equation}\label{eq:zequality}
x_3 = K -  (x_1 + x_2) ,
\end{equation}
and by induction, it follows from lemmas~\ref{lem:basis} and~\ref{lem:inductionstep} that $\exists {\cal S}_1 \subset [K]$, $|{\cal S}_1|= x_1$, $|C_{{\cal S}_1}| \leq 2x_1+1$. We now apply induction again with the set ${\cal S}_1$ as a basis and use Lemma~\ref{lem:mthreeinductionstep} for the induction step to show that $\exists {\cal S}_2 \subset [K]$, $|{\cal S}_2|= x_1+x_2$, $|C_{{\cal S}_2}| \leq x_3=K-|{\cal S}_2|$, and hence, we get the following upper bound using Corollary~\ref{cor:dofouterbound},
\begin{eqnarray}
\eta(K,3) &\leq& x_3\nonumber 
\\&=&\frac{5(K+1)}{8}
\end{eqnarray}

\end{proof}

Note that all the DoF upper bounding proofs used so far employ Corollary~\ref{cor:dofouterbound}. We now restrict our attention to upper bounds on $\tau(M)$ that follow by a direct application of Corollary~\ref{cor:dofouterbound}. More precisely, for a $K-$user channel defined as in Section~\ref{sec:systemmodel} with an assignment of the transmit sets $\{{\cal T}_i\}_{i \in [K]}$,  define $B(K,\{{\cal T}_i\})$ as the upper bound that follows by Corollary~\ref{cor:dofouterbound} for this channel, i.e.,
\begin{equation}\label{eq:upperboundeqn}
B(K,\{{\cal T}_i\})=\min_{{\cal S} \subseteq [K]}\max (|C_{\cal S}|, K-|{\cal S}|)
\end{equation}
Now, let $\eta_{\textrm{out}}(K,M)$ and $\tau_{\textrm{out}}(M)$ be the corresponding upper bounds that apply on $\eta(K,M)$ and $\tau(M)$. 
\begin{eqnarray}\label{eq:bareta}
\eta_{\textrm{out}}(K,M) &=& \max_{\{{\cal T}_i\}_{i \in [K]}: {\cal T}_i \subseteq [K],|{\cal T}_i| \leq M, \forall i \in [K]} B(K,\{{\cal T}_i\}),\nonumber
\\
\\\tau_{\textrm{out}}(M) &=& \lim_{K \rightarrow \infty} \frac{\eta_{\textrm{out}}(K,M)}{K}
\end{eqnarray}
All the facts that we stated above about $\tau(M)$ hold for $\tau_{\textrm{out}}(M)$, as all the upper bounding proofs follow by a direct application of Corollary~\ref{cor:dofouterbound}. We now identify a property for message assignment strategies, for which $\tau_{\textrm{out}}(M) > \frac{1}{2}, \forall M>2$. Note that this does not necessarily imply that $\tau(M)>\frac{1}{2}, \forall M>2$, but it provides some insight into whether this statement might be true.

For each possible message assignment, define a bipartite graph with partite sets of size $K$. Vertices in one of the partite sets represent transmitters, and vertices in the other set represent messages. There exists an edge between two vertices if and only if the corresponding message is available at the designated transmitter. We note that the cooperation order constraint implies that the maximum degree of nodes in one of the partite sets is bounded by $M$. We now observe that for any set ${\cal A}$ of transmitters, $C_{\cal A}=\{i: {\cal T}_i \cap {\cal A} \neq \phi\}$ is just the neighboring set $N_G({\cal A})$ in the corresponding bipartite graph $G$. Please refer to Figure~\ref{fig:bpgraph} for an illustration of the bipartite graph representation of message assignments.
\begin{figure}[htb]
\centering
\includegraphics[width=0.9\columnwidth]{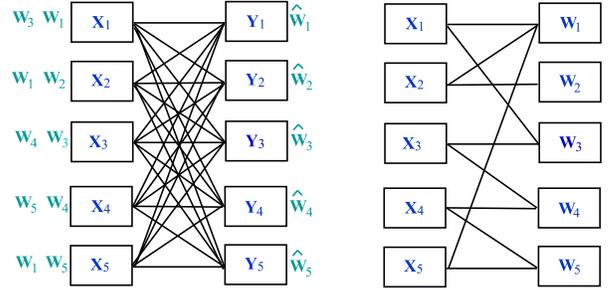}
\caption{The bipartite graph on the right side represents the message assignment for the $5-$user channel shown on the left side.}
\label{fig:bpgraph}
\end{figure}

Let ${\cal U}_G,{\cal V}_G,$ denote the partite sets corresponding to transmitters and messages in graph $G$, with respect to order. For all values of $i\in[K]$, define the following.
\begin{equation}
e_G(i) = \min_{{\cal A}\subseteq {\cal U}_G : |{\cal A}|=i} |N_G({\cal A})|
\end{equation}
then we can readily see that,
\begin{equation}
\eta_{\textrm{out}}(K,M) = \max_{G \in {\cal G}_M(K)} \min_{i \in [K]} \max(K-i,e_G(i))
\end{equation}
where ${\cal G}_M(K)$ is the set of all bipartite graphs, whose equi-sized partite sets have size $K$, and the maximum degree of the nodes in the partite set ${\cal V}_G$ is $M$.

For values of $M>2$, Pinsker proved the following result in $1973$~\cite{pinsker-1973}.
\begin{thm}\label{thm:pinsker}
For any $M>2$, $\exists$ a constant $c>1$,  and a sequence of $M-$regular bipartite graphs $(G_{M,K})$ whose partite sets have $K$ vertices, such that the following is true.
\begin{equation}\label{eq:expanders}
\lim_{K\rightarrow \infty} \frac{e_{G_{M,K}}(\alpha K)}{\alpha K} \geq c, \forall 0 < \alpha \leq \frac{1}{2}
\end{equation}
\end{thm}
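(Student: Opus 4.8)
The plan is to prove this by the probabilistic method, over the ensemble of random $M$-regular bipartite graphs obtained as the union of $M$ independent uniformly random perfect matchings $\pi_1,\dots,\pi_M$ between ${\cal U}_G$ and ${\cal V}_G$, each identified with $[K]$. Such a graph is $M$-regular as a multigraph; collapsing parallel edges only lowers degrees and does not decrease any $|N_G(S)|$, so it is harmless to argue on the multigraph. I would fix a constant $c$ with $1 < c < M-1$ — possible precisely because $M>2$, which is exactly where the hypothesis enters — and, shrinking $c$ if needed, also require $c\le 2$ so that $cs \le K$ for all $s\le K/2$.

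For a fixed $S\subseteq{\cal U}_G$ with $|S|=s$ and a fixed $T\subseteq{\cal V}_G$ with $|T|=t:=\lceil cs\rceil$, a single uniform matching satisfies $\pi_j(S)\subseteq T$ with probability $\binom{t}{s}/\binom{K}{s}$, so by independence $\Pr[N_G(S)\subseteq T]=\bigl(\binom{t}{s}/\binom{K}{s}\bigr)^{M}$. A union bound over all $S$ of size $s$ and all $T$ of size $t$ then bounds the probability that some $s$-set $S$ on the ${\cal U}_G$ side has $|N_G(S)|\le t-1<cs$ by
$$p_s \;:=\; \binom{K}{s}\binom{K}{t}\left(\frac{\binom{t}{s}}{\binom{K}{s}}\right)^{M}.$$
It suffices to show $\sum_{s=1}^{\lfloor K/2\rfloor} p_s \to 0$ as $K\to\infty$: on the complementary event every subset of size $\le K/2$ of ${\cal U}_G$ expands by a factor $>c-o(1)$, which after absorbing the $o(1)$ into $c$ (or simply into the limit) yields \eqref{eq:expanders}.

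To control the sum I would split the range of $s$. For $s$ in a fixed bounded range $1\le s\le s_0$, $\binom{t}{s}$ is bounded while $\binom{K}{s}=\Theta(K^s)$ and $\binom{K}{t}=\Theta(K^{t})$, so $p_s=\Theta\!\bigl(K^{\,s+t-Ms}\bigr)=\Theta\!\bigl(K^{\,s(1+c-M)}\bigr)\to 0$ since $1+c-M<0$. For $s=\alpha K$ with $\alpha$ bounded below, I would use the entropy estimates $\binom{K}{\alpha K}=2^{K(H(\alpha)+o(1))}$ and $\binom{c\alpha K}{\alpha K}=2^{K(c\alpha H(1/c)+o(1))}$, where $H$ is the binary entropy, to obtain $\tfrac1K\log_2 p_s = \phi(\alpha)+o(1)$ with
$$\phi(\alpha) \;=\; (1-M)\,H(\alpha) + H(c\alpha) + Mc\alpha\,H(1/c).$$

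The main obstacle is to verify that $\phi(\alpha)<0$ for all $\alpha\in(0,\tfrac12]$. As $\alpha\to 0$ one has $\phi(\alpha)\sim(1+c-M)\,\alpha\log_2(1/\alpha)<0$, which disposes of small $\alpha$ and reconnects smoothly with the bounded-$s$ estimate; the intermediate and near-$\tfrac12$ ranges require a short calculus argument (monotonicity or convexity of $\phi$, or, failing a clean closed form, a direct numerical bound for each fixed $M$), and this is the only genuinely delicate point. Granting $\phi(\alpha)\le-\delta<0$ uniformly on each interval $[\alpha_0,\tfrac12]$, every such $p_s$ is $2^{-\Omega(K)}$, so the at most $K$ terms sum to $o(1)$; together with the bounded-$s$ estimate this gives $\sum_s p_s\to 0$, and picking, for each $K$, a graph $G_{M,K}$ outside all the bad events produces the desired sequence of $M$-regular expanders satisfying \eqref{eq:expanders}.
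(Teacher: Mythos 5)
The paper does not actually prove this theorem: it is quoted from Pinsker's 1973 paper and used as a black box, so there is no in-paper proof to compare against. Your probabilistic argument is the standard modern proof of Pinsker's result, and it closely mirrors the technique the paper itself uses later for $\lim_{M\rightarrow\infty}\tau_{\textrm{out}}(M)=1$: a union of $M$ independent uniform perfect matchings, the exact per-matching probability $\binom{t}{s}/\binom{K}{s}$, a union bound over $S$ and $T$, and an entropy estimate of the exponent. The difference is that the paper only needs expansion of sets of one fixed size $\epsilon K$ when $M$ is large, whereas you need expansion of \emph{all} sizes $s\le K/2$ simultaneously for every fixed $M>2$; that is why you must sum $p_s$ over $s$ and control the exponent $\phi(\alpha)$ on all of $(0,\tfrac12]$. (Note also that positive probability of a good graph suffices for existence, so $\limsup_K\sum_s p_s<1$ would already do; you do not strictly need the sum to vanish.)

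The one real gap is the step you flag yourself: the claim $\phi(\alpha)<0$ on $(0,\tfrac12]$ is the crux and should not be left to "monotonicity or a numerical check" --- it is false for $c$ too large, so the choice of $c$ matters. The clean way to close it is a continuity argument in $c$: for fixed $M>2$, as $c\downarrow 1$ one has $H(1/c)\rightarrow 0$ and $H(c\alpha)\rightarrow H(\alpha)$ uniformly on $[\alpha_0,\tfrac12]$, so $\phi(\alpha)\rightarrow (2-M)H(\alpha)\le (2-M)H(\alpha_0)<0$ uniformly there, and hence some $c=c(M,\alpha_0)>1$ works on that interval. For the range $s\le\alpha_0 K$ your "bounded $s\le s_0$" case does not cover $s$ growing like $\log K$ or $\sqrt{K}$, and the entropy approximation is not uniform enough there; instead use the crude bounds $\binom{t}{s}/\binom{K}{s}\le (t/K)^s$ and $\binom{K}{s}\le (eK/s)^s$ to get $p_s\le\bigl(e^{1+c}\,c^{M-c}\,(s/K)^{M-1-c}\bigr)^{s}$, which for any fixed $1<c<M-1$ and $\alpha_0$ small enough sums to $o(1)$ over $1\le s\le\alpha_0 K$. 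With those two patches the proof is complete; the remaining points you mention (multigraph regularity, absorbing the ceilings and $o(1)$ losses into the constant or the limit) are indeed harmless.
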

We next show that the above statement implies that $\tau_{\textrm{out}}(M)>2, \forall M>2$.
\begin{cor}
\begin{equation}
\tau_{\textrm{out}}(M) > \frac{1}{2}, \forall M>2.
\end{equation}
\end{cor}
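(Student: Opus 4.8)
\emph{Proof plan.} The idea is to feed the expander graphs supplied by Theorem~\ref{thm:pinsker} directly into the formula
\begin{equation*}
\eta_{\textrm{out}}(K,M) = \max_{G \in {\cal G}_M(K)} \min_{i \in [K]} \max\bigl(K-i,\ e_G(i)\bigr).
\end{equation*}
Fix $M > 2$, and let $c > 1$ and the sequence $(G_{M,K})$ be as in Theorem~\ref{thm:pinsker}. First I would verify that $G_{M,K} \in {\cal G}_M(K)$: since $G_{M,K}$ is $M$-regular, every vertex — in particular every vertex of the message-side partite set ${\cal V}_{G_{M,K}}$ — has degree exactly $M$, so the maximum degree on that side is $M$. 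Hence $G_{M,K}$ is admissible in the outer maximization, and $\eta_{\textrm{out}}(K,M) \geq \min_{i \in [K]} \max\bigl(K-i,\ e_{G_{M,K}}(i)\bigr)$.

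Next I would lower bound this minimum by splitting the index range at a threshold $\lfloor \alpha K\rfloor$, for a parameter $\alpha \in (0,\tfrac12]$ to be chosen at the end. For $i \leq \lfloor\alpha K\rfloor$ the trivial term dominates: $K-i \geq K-\alpha K = (1-\alpha)K$. For $i > \lfloor\alpha K\rfloor$ I would use the monotonicity of $e_G(\cdot)$ — any set of $i$ transmitters contains a subset of size $\lfloor\alpha K\rfloor$ whose neighborhood is contained in that of the larger set, so $e_G(i)\geq e_G(\lfloor\alpha K\rfloor)$ — together with the expansion estimate $e_{G_{M,K}}(\lfloor\alpha K\rfloor) \geq (c-o(1))\,\alpha K$ that follows from \eqref{eq:expanders}. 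This yields $\max(K-i, e_{G_{M,K}}(i)) \geq e_{G_{M,K}}(i) \geq (c-o(1))\,\alpha K$ on the second range. Combining the two ranges gives $\eta_{\textrm{out}}(K,M) \geq \bigl(\min(1-\alpha,\ c\alpha) - o(1)\bigr)K$.

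Dividing by $K$ and letting $K\to\infty$ produces $\tau_{\textrm{out}}(M) \geq \min(1-\alpha,\ c\alpha)$ for the chosen $\alpha$. It then remains to optimize: the two expressions balance at $\alpha^\star = \tfrac{1}{1+c}$, which lies in $(0,\tfrac12]$ precisely because $c\geq 1$, and there $\min(1-\alpha^\star,\ c\alpha^\star) = \tfrac{c}{1+c}$. Since $c>1$ this is strictly larger than $\tfrac12$, so $\tau_{\textrm{out}}(M) \geq \tfrac{c}{1+c} > \tfrac12$, which is the claim. (Note only the single value $\alpha = \alpha^\star$ is needed, and Theorem~\ref{thm:pinsker} supplies the expansion bound for it since $\alpha^\star \leq \tfrac12$.)

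The one delicate point — the step I would be most careful about — is the passage from the asymptotic statement \eqref{eq:expanders}, which only controls $e_{G_{M,K}}(\alpha K)$ for fixed linear-sized arguments with $\alpha\leq\tfrac12$, to a bound valid inside the minimization over \emph{all} $i\in[K]$, including $i$ well above $K/2$. Monotonicity of $e_G$ is exactly what bridges this gap, and one must carry the $o(K)$ error (and the floors) through the split carefully so that it does not leak into the final limit. Everything else is elementary.
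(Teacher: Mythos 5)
Your proof is correct and rests on the same ingredients as the paper's: Pinsker's expander sequence from Theorem~\ref{thm:pinsker}, the monotonicity of $e_G(\cdot)$, and the tension between $K-i$ and $e_G(i)$ in the minimization. The only difference is packaging --- the paper argues by contradiction via the location of the minimizer $i_{\textrm{min}}(G_{M,K})$, whereas you give a direct lower bound by splitting at $\alpha^\star = \tfrac{1}{1+c}$, which has the small bonus of yielding the explicit estimate $\tau_{\textrm{out}}(M) \geq \tfrac{c}{1+c} > \tfrac{1}{2}$.
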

\begin{proof}
For each bipartite graph $G$ with partite sets of size $K$, define $i_{\textrm{min}}(G)$ as,
\begin{equation}
i_{\textrm{min}}(G)=\textrm{argmin}_i \max(K-i,e_G(i))
\end{equation}
Now, assume that $\tau_{\textrm{out}}(M) \leq \frac{1}{2}$, then for the sequence $(G_{M,K})$ chosen as in the statement of Theorem~\ref{thm:pinsker},   
\begin{equation}
\lim_{K\rightarrow \infty} \frac{\max(K-i_{\textrm{min}}(G_{M,K}), e_{G_{M,K}}(i_{\textrm{min}}(G_{M,K})))}{K} \leq \frac{1}{2}.
\end{equation}
It follows that,
\begin{equation}
\lim_{K\rightarrow \infty} \frac{K-i_{\textrm{min}}(G_{M,K})}{K} \leq \frac{1}{2},
\end{equation}
or,
\begin{equation}
\lim_{K\rightarrow \infty}  \frac{i_{\textrm{min}}(G_{M,K})}{K} \geq \frac{1}{2}
\end{equation}
But then as $e_G(i)$ is non-decreasing in $i$,~\eqref{eq:expanders} implies that,
\begin{equation}
\lim_{K\rightarrow \infty} \frac{e_{G_{M,K}}(i_{\textrm{min}}(G_{M,K}))}{K} > \frac{1}{2}.
\end{equation}
Therefore, the result in~\eqref{eq:expanders} implies that $\tau_{\textrm{out}}(M)>\frac{1}{2}, \forall M>2$.
\end{proof}
It is worth noting that a sequence of bipartite graphs satisfying~\eqref{eq:expanders} is said to define a \emph{vertex expander} as $K\rightarrow \infty$. To summarize, we have shown that for message assignment strategies corresponding to vertex expanders, one cannot apply the bound in~\eqref{eq:upperboundeqn} directly to show that $\tau(M)=\tau(1)=\frac{1}{2}$ for any $M>2$. Finally, we show that in case the upper bound $\tau_\textrm{out}(M)$ is tight, then using partial cooperation, the DoF gain can approach that achieved through assigning each message to all transmitters (full cooperation). More precisely, we show that,
\begin{thm}
\begin{equation}
\lim_{M \rightarrow \infty} \tau_{\textrm{out}}(M) = 1
\end{equation}
\end{thm}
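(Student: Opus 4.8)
The plan is to prove $\lim_{M\to\infty}\tau_{\textrm{out}}(M)=1$ by the probabilistic method. Since $\eta_{\textrm{out}}(K,M)\le K$ we have $\tau_{\textrm{out}}(M)\le 1$ for every $M$, and since adding edges to a message node can only enlarge neighborhoods (equivalently, passing from cooperation order $M$ to $M+1$ only enlarges the class of admissible graphs), $\eta_{\textrm{out}}(K,M)$ is non-decreasing in $M$; hence $\tau_{\textrm{out}}(M)$ is non-decreasing and bounded above by $1$, so it suffices to show that for every $\epsilon\in(0,1)$ there is an integer $M=M(\epsilon)$, not depending on $K$, with $\tau_{\textrm{out}}(M)\ge 1-\epsilon$. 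I would first reduce this to a vertex-expansion statement: if $G\in{\cal G}_M(K)$ satisfies $e_G(\lceil\epsilon K\rceil)\ge(1-\epsilon)K$, then, using that $e_G(\cdot)$ is non-decreasing and that $K-i>(1-\epsilon)K$ for every $i<\lceil\epsilon K\rceil$, one gets $\min_{i\in[K]}\max(K-i,\,e_G(i))\ge(1-\epsilon)K$, so that $\eta_{\textrm{out}}(K,M)\ge(1-\epsilon)K$ and therefore $\tau_{\textrm{out}}(M)\ge 1-\epsilon$. Everything thus reduces to constructing, for all large $K$, a message assignment of cooperation order $M$ in which every set of $\lceil\epsilon K\rceil$ transmitters carries at least $(1-\epsilon)K$ messages.

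For the construction I would draw the transmit sets ${\cal T}_1,\dots,{\cal T}_K$ independently, each uniformly at random among the $M$-subsets of $[K]$ (this already gives $G\in{\cal G}_M(K)$). Fix a set ${\cal A}$ of transmitters with $|{\cal A}|=a:=\lceil\epsilon K\rceil$. The number of messages \emph{not} carried by ${\cal A}$ is $K-|C_{\cal A}|=\sum_{j=1}^{K}\mathbf{1}\{{\cal T}_j\cap{\cal A}=\emptyset\}$, a sum of $K$ independent Bernoulli random variables with common mean $q=\binom{K-a}{M}/\binom{K}{M}\le(1-a/K)^{M}\le(1-\epsilon)^{M}$, so $\mathbb{E}\,[\,K-|C_{\cal A}|\,]=qK\le(1-\epsilon)^{M}K$. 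Choosing $M$ large enough that $(1-\epsilon)^{M}<\epsilon$, the multiplicative Chernoff bound (using that the bound is monotone in the mean) yields $\Pr[\,K-|C_{\cal A}|>\epsilon K\,]\le(e\,(1-\epsilon)^{M}/\epsilon)^{\epsilon K}$.

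I would then take a union bound over the at most $2^{K}$ sets ${\cal A}$ of size $a$: the probability that \emph{some} such ${\cal A}$ carries fewer than $(1-\epsilon)K$ messages is at most $2^{K}(e\,(1-\epsilon)^{M}/\epsilon)^{\epsilon K}=(2^{1/\epsilon}\,e\,(1-\epsilon)^{M}/\epsilon)^{\epsilon K}$, which tends to $0$ as $K\to\infty$ as soon as $(1-\epsilon)^{M}<\epsilon\,2^{-1/\epsilon}/e$. The right-hand side depends only on $\epsilon$, so a single integer $M=M(\epsilon)$ works simultaneously for all large $K$. For such $M$ and all large $K$ there is therefore a graph $G\in{\cal G}_M(K)$ with $e_G(\lceil\epsilon K\rceil)\ge(1-\epsilon)K$, and by the reduction of the first paragraph $\eta_{\textrm{out}}(K,M)\ge(1-\epsilon)K$, hence $\tau_{\textrm{out}}(M)\ge 1-\epsilon$. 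Combined with $\tau_{\textrm{out}}(M)\le 1$ and monotonicity in $M$, this gives $\lim_{M\to\infty}\tau_{\textrm{out}}(M)=1$.

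The step I expect to be the main obstacle is making the expansion guarantee hold \emph{simultaneously} over all $\binom{K}{\epsilon K}$ transmitter subsets of size $\epsilon K$: the Chernoff tail must decay in $K$ strictly faster than this binomial coefficient grows, which is exactly what forces $M$ to be chosen large as a function of $\epsilon$. The key feature that makes the argument succeed is that this threshold on $M$ is independent of $K$, so it does not interfere with the $K\to\infty$ limit defining $\tau_{\textrm{out}}(M)$. The remaining bookkeeping — the ceilings $\lceil\epsilon K\rceil$, the constraint $M\le K-\lceil\epsilon K\rceil$ needed for $q$ to be well defined, and restricting to $\epsilon<\tfrac12$ — is routine and affects only lower-order terms.
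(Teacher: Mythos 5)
Your proof is correct and follows essentially the same route as the paper: reduce the claim to constructing, for $M$ large as a function of $\epsilon$ only, a cooperation-order-$M$ assignment whose bipartite graph is a vertex expander, and produce it by the probabilistic method with a union bound over all transmitter subsets of size $\epsilon K$. The only differences are cosmetic: you take each ${\cal T}_i$ to be an independent uniform $M$-subset and apply a Chernoff bound to the number of uncovered messages, whereas the paper takes a union of $M$ random perfect matchings and union-bounds over candidate neighborhoods ${\cal B}$ of size $(1-\epsilon)K$; both yield a threshold $M(\epsilon)$ independent of $K$, which is the crux of the argument.
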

\begin{proof}
The above statement follows as a corollary to a result in~\cite{appextremalcomb}. We provide the proof for completeness. We show that,
\begin{equation}\label{eq:limtau}
\forall \epsilon > 0 , \exists M(\epsilon): \forall M \geq M(\epsilon), {\tau}_{out}(M)> (1-\epsilon)
\end{equation}
For each positive integer $K$, we construct a bipartite graph $G_M(K)$, whose partite sets are of order $K$, by taking the union of $M$ {\em random} perfect matchings between the two partite sets. i.e., the matchings are probabilistically independent, and each is drawn uniformly from the set of all possible matchings. One can easily see that the maximum degree of nodes in $G_M(K)$ is bounded by $M$. i.e., $\Delta\left(G_M(K)\right) \leq M$, and hence, $G_M(K) \in {\cal G_M}(K)$. We will prove that for any $\epsilon > 0$,  there exists an $M(\epsilon)$ sufficiently large, such that for any $M \geq M(\epsilon)$, the probability that each set of $\epsilon K$ nodes in the partite set ${\cal U}_{G_M(K)}$ have more than $(1-\epsilon)K$ neighbors, is bounded away from zero for large enough $K$. More precisely, we show that,
\begin{equation}\label{eq:suffcondition}
\lim_{K \rightarrow \infty} Pr[\forall {\cal A} \subset {\cal U}_{G_M(K)}: |{\cal A}|=\epsilon K, |N_{G_M(K)}({\cal A})| > (1-\epsilon)K] > 0,
\end{equation}
and hence, for large enough $K$, there exists a graph $G$ in ${\cal G}_M(K)$ where all subsets of ${\cal U}_{G}$ of order $\epsilon K$ have more than $(1-\epsilon)K$ neighbors in ${\cal V}_G$, i.e.,
\begin{equation}
e_G(i) > (1-\epsilon)K, \forall i \geq \epsilon K
\end{equation} 
and it follows that $\eta_{\textrm{out}}(K,M) > (1-\epsilon)K$, and~\eqref{eq:limtau} holds. 

We now show that~\eqref{eq:suffcondition} holds. Let ${\cal A} \subset {\cal U}_{G_M(K)}, {\cal B} \subset {\cal V}_{G_M(K)}$ such that $|{\cal A}|=\epsilon K, |{\cal B}| = (1-\epsilon)K$. For any random perfect matching, the probability that all the neighbors of ${\cal A}$ are in ${\cal B}$ is $\frac{{(1-\epsilon)K\choose {\epsilon K}}}{{K \choose {\epsilon K}}}$. By independence of the matchings, we get the following,
\begin{eqnarray}
Pr[N_{G_M(K)}({\cal A}) \subseteq {\cal B}] &=& \left(\frac{{(1-\epsilon)K\choose {\epsilon K}}}{{K \choose {\epsilon K}}}\right)^M\nonumber
\\&\leq&\left((1-\epsilon)^{\epsilon K}\right)^M
\end{eqnarray}
A direct application of the union bound results in the following,
\begin{eqnarray}
Pr[|N_{G_M(K)}({\cal A})| &\leq& (1-\epsilon)K]\nonumber \\&\leq& \sum_{{\cal B} \subset {\cal V}_{G_M(K)}: |{\cal B}|=(1-\epsilon)K} Pr[N_{G_M(K)}({\cal A}) \subseteq {\cal B}]\nonumber
\\&\leq& {{K} \choose{(1-\epsilon)K}} (1-\epsilon)^{\epsilon M K}
\end{eqnarray}
and,
\begin{eqnarray}
Pr[\exists {\cal A} &\subset& {\cal U}_{G_M(K)}: |{\cal A}|=\epsilon K, |N_{G_M(K)}({\cal A})| \leq (1-\epsilon)K]\nonumber\\ &\leq& \sum_{{\cal A} \subset {\cal U}_{G_M(K)}: |{\cal A}|=\epsilon K} Pr[|N_{G_M(K)}({\cal A})| \leq (1-\epsilon)K]\nonumber
\\&\leq& {K \choose {\epsilon K}}{K \choose {(1-\epsilon)K}} (1-\epsilon)^{\epsilon M K}\nonumber
\\&=& {K \choose {\epsilon K}}^2 (1-\epsilon)^{\epsilon M K}\nonumber
\\&\overset{(a)}{\approx}& 2^{2K H(\epsilon)} (1-\epsilon)^{\epsilon M K}\nonumber
\\&=& 2^{\left(2H(\epsilon) + \epsilon M \log (1-\epsilon)\right)K}
\end{eqnarray}
where $H(.)$ is the binary entropy function, and $(a)$ follows as ${n \choose {\epsilon n}} \approx 2^{n H(\epsilon)}$ for large enough $n$. Now, we choose $M(\epsilon) > \frac{2H(\epsilon)}{- \epsilon \log(1-\epsilon)}$, to make the above exponent negative, and the above probability will be strictly less than unity, i.e., we showed that for any $M \geq M(\epsilon)$,
\begin{equation}
\lim_{K \rightarrow \infty} Pr[\exists {\cal A} \subset {\cal U}_{G_M(K)}: |{\cal A}|=\epsilon K, |N_{G_M(K)}({\cal A})| \leq (1-\epsilon)K] < 1,
\end{equation}
which implies that~\eqref{eq:suffcondition} is true.
\end{proof}
\section{Conclusion}
We studied the $K-$user fully connected Gaussian interference channel with a cooperation order constraint $M$. For the case where $M=2$, we showed that the limit of the per user DoF number $\tau(2)=\frac{1}{2}$ is the same as that achieved without cooperation. Moreover, for any value of $M$, we showed that message assignment strategies satisfying a local cooperation constraint cannot achieve a per user DoF number greater than $\frac{1}{2}$. Finally, we defined the upper bound on the per user DoF number $\tau_{\textrm{out}}(M)$ which characterizes all known upper bounds on $\tau(M)$, and showed that $\tau_{\textrm{out}}(M)>2, \forall M>2$, as a corollary of the existence of large bipartite vertex expanders. Thereby, suggesting that message assignment strategies corresponding to vertex expanders could potentially lead to a scalable DoF cooperation gain. 

\appendix
\section*{Auxiliary Lemma for Large Networks Upper Bounds}
\begin{lem}\label{lem:genineq}
If $K \geq (M-1)(n+1)+1$, $M \geq 2$, and $\exists {\cal S}\subseteq [K]$ such that $|{\cal S}| \leq (M-1)n+1$, then,
\begin{equation}
M(K-|{\cal S}|) < (K-n)\left((M-1)(n+1) + 2 - |{\cal S}|\right)
\end{equation}
\end{lem}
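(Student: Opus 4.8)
The plan is to regard the claimed inequality as a statement about an affine function of the single quantity $s := |{\cal S}|$ and to reduce it to its worst case. Setting
\begin{equation}
g(s) = (K-n)\bigl((M-1)(n+1)+2-s\bigr) - M(K-s),
\end{equation}
the lemma asserts $g(s) > 0$ for every integer $s$ with $0 \le s \le (M-1)n+1$. Expanding, $g$ is affine in $s$ with slope $M - (K-n)$, so I would first pin down the sign of this slope.

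The key step is to observe that the slope is non-positive, i.e.\ $K - n \ge M$. This is exactly where the hypothesis $K \ge (M-1)(n+1)+1$ enters: it gives
\begin{equation}
K - n \;\ge\; (M-1)(n+1) + 1 - n \;=\; (M-2)n + M \;\ge\; M,
\end{equation}
the last step using $M \ge 2$ and $n \ge 0$. Hence $g$ is non-increasing on the admissible range of $s$, and it suffices to verify the inequality at the endpoint $s = (M-1)n+1$.

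For the final step I would substitute $s = (M-1)n+1$, noting that then $(M-1)(n+1)+2-s = (M-1)+1 = M$, so that
\begin{equation}
g\bigl((M-1)n+1\bigr) = M(K-n) - M\bigl(K-(M-1)n-1\bigr) = M\bigl((M-2)n+1\bigr) > 0,
\end{equation}
again because $M \ge 2$ and $n \ge 0$. This completes the argument.

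The computations here are entirely routine; the only real idea is the monotonicity reduction, which sidesteps any case analysis over $s$ and localizes the inequality to a single transparent instance. The one point worth watching is that the hypothesis on $K$ is genuinely what forces $K - n \ge M$ (equivalently $(M-2)n \ge 0$), and this is precisely why the statement is restricted to $M \ge 2$: for $M = 1$ both the lemma and this argument break down, consistent with the lemma only ever being invoked for $M \ge 2$.
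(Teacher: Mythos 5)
Your proof is correct and is essentially the same argument as the paper's: both verify the inequality at the endpoint $|{\cal S}| = (M-1)n+1$ and then use $K - n \ge (M-2)n + M \ge M$ to handle smaller $|{\cal S}|$ (the paper phrases this as each decrement of $|{\cal S}|$ adding $M$ to the left side but $K-n$ to the right side, which is your non-increasing affine function $g$ in different words). No issues.
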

\begin{proof}
We first prove the statement for the case where $|{\cal S}| = (M-1)n+1$. This directly follows, as,
\begin{eqnarray}
M(K-|{\cal S}|) &=& M(K-((M-1)n+1))\nonumber
\\&\leq& M(K-(n+1))\nonumber
\\&<& M(K-n)\nonumber
\\&=&(K-n)\left((M-1)(n+1) + 2 - |{\cal S}|\right)\nonumber
\\
\end{eqnarray}
In order to complete the proof, we note that each decrement of $|{\cal S}|$ leads to an increase in the left hand side by $M$, and in the right hand side by $K-n$, and,
\begin{eqnarray}
K-n &\geq&  (M-1)(n+1)+1 - n\nonumber
\\&=&(M-2)n + M\nonumber
\\&\geq& M
\end{eqnarray} 
\end{proof}


\begin{thebibliography}{6}


\bibitem{CoMP-book}
P.~Marsch and G.~P.~Fettweis ``Coordinated Multi-Point in Mobile Communications: from theory to practice,''   First Edition, \emph{Cambridge}, Aug. 2011.
\bibitem{Madsen-Nosratinia}
A.~Host-Madsen and A.~Nosratinia, ``The multiplexing gain of wireless networks,'' in \emph{Proc. IEEE International Symposium on Information Theory (ISIT)}, Nice, Jun. 2007.

\bibitem{Cadambe-IA}
V.~Cadambe and S.~Jafar, ``{I}nterference {A}lignment and {D}egrees of
  {F}reedom of the {K}-{U}ser {I}nterference {C}hannel,'' \emph{IEEE Trans.
  Inf. Theory}, vol.~54, no.~8, pp. 3425 --3441, Aug. 2008.
\bibitem{Annapureddy-ElGamal-Veeravalli-IT11}
V.~S.~Annapureddy, {A.~El Gamal}, and V.~V.~Veervalli, ``Degrees of Freedom of Interference Channels with CoMP Transmission and Reception,'' \emph{Submitted to IEEE Trans.
  Inf. Theory}, Sep. 2011.

\bibitem{ElGamal-Annapureddy-Veeravalli-ICC12}
{A.~El Gamal}, V.~S.~Annapureddy, and V.~V.~Veervalli, ``Degrees of freedom (DoF) of Locally Connected Interference Channels with Coordinated Multi-Point (CoMP) Transmission,'' \emph{In Proc. IEEE International Conference on Communications (ICC)}, Ottawa, Jun. 2012.

\bibitem{pinsker-1973}
M.~Pinsker,
``{On the Complexity of a Concentrator,}''
\emph{$7^{th}$ International Teletraffic Conference}, pp. 318/1 --318/4, 1973.

\bibitem{appextremalcomb}
D.~Spielman,
``{Lecture Notes on Applied Extremal Combinatorics,}''
\emph{MIT}, 1998.







\end{thebibliography}
\end{document}